\g@addto@macro{\UrlBreaks}{\UrlOrds}
\newcommand{\squeezeup}{\vspace{-2mm}} 
\theoremstyle{plain}
\newtheorem{theorem}{Theorem}[section]
\newtheorem{lemma}[theorem]{Lemma}
\newtheorem{corollary}[theorem]{Corollary}
\newtheorem{proposition}[theorem]{Proposition}  
\theoremstyle{definition}
\newtheorem{definition}[theorem]{Definition}
\theoremstyle{remark}
\newcommand{\thm}{\begin{theorem}}
\newcommand{\ethm}{\end{theorem}}
\newcommand{\defn}{\begin{definition}}
\newcommand{\edefn}{\end{definition}}
\newcommand{\eprop}{\end{proposition}}
\newcommand{\prop}{\begin{proposition}}
\newcommand{\cor}{\begin{corollary}}
\newcommand{\ecor}{\end{corollary}}
\begin{document}

\title{Setting the threshold for high throughput detectors\\
\large{A mathematical approach for ensembles of dynamic, heterogeneous, probabilistic anomaly detectors}
\thanks{This is the extended version of this document including proofs to mathematical results. Please cite the published version appearing in the Proceedings of IEEE Big Data Conference 2017.}  
\thanks{This manuscript has been authored by UT-Battelle, LLC under Contract No. DE-AC05-00OR22725 with the U.S. Department of Energy. The United States Government retains and the publisher, by accepting the article for publication, acknowledges that the United States Government retains a non-exclusive, paid-up, irrevocable, world-wide license to publish or reproduce the published form of this manuscript, or allow others to do so, for United States Government purposes. The Department of Energy will provide public access to these results of federally sponsored research in accordance with the DOE Public Access Plan (\url{http://energy.gov/downloads/doe-public-access-plan}).}
}

\author{
\IEEEauthorblockN{Robert A. Bridges\IEEEauthorrefmark{1}, Jessie D. Jamieson\IEEEauthorrefmark{2}, Joel W. Reed\IEEEauthorrefmark{1}}\\
\IEEEauthorblockA{\IEEEauthorrefmark{1}Computational Sciences \& Engineering Division, Oak Ridge National Laboratory, Oak Ridge, TN\\
\{bridgesra, reedjw\}@ornl.gov}
\IEEEauthorblockA{\IEEEauthorrefmark{2}Department of Mathematics, University of Nebraska, Lincoln, NE, jdjamieson@huskers.unl.edu
}
}

\maketitle
\thispagestyle{plain}
\pagestyle{plain}

\begin{abstract}
Anomaly detection (AD) has garnered ample attention in security research, as such algorithms complement existing signature-based methods but promise detection of never-before-seen attacks. 
Cyber operations now manage a high volume of heterogeneous log data; hence, AD in such operations involves multiple (e.g., per IP, per data type) ensembles of detectors modeling heterogeneous characteristics (e.g., rate, size, type) often with adaptive online models producing alerts in near real time.  
Because of the high data volume, setting the threshold for each detector in such a system is an essential yet underdeveloped configuration issue that, if slightly mistuned, can leave the system useless, either producing a myriad of alerts (and flooding downstream systems) or giving none.  

In this work, we build on the foundations of Ferragut et al. to provide a set of rigorous results for understanding the relationship between threshold values and alert quantities, and we propose a principled algorithm for setting the threshold in practice. 
Specifically, we create an algorithm for setting the threshold of multiple, heterogeneous, possibly dynamic detectors completely a priori, in principle. 
Indeed, if the underlying distribution of the incoming data is known (closely estimated), the algorithm provides provably manageable thresholds. 
If the distribution is unknown (e.g., has changed over time) 
our analysis gives insight into how the model distribution differs from the actual distribution, indicating a period of model refitting is necessary.   
We provide empirical experiments showing the efficacy of the capability by regulating the alert rate of a system with $\approx$2,500 adaptive detectors scoring over 1.5M events in 5 hours. 
Further, we demonstrate on the real network data and detection framework of Harshaw et al. the alternative case, showing how the inability to regulate alerts indicates how the detection model is not a good fit to the data.  
\end{abstract}

\section{Introduction}
\label{sec:intro}

The current state of defense against cyber attacks is a layered defense, primarily of a variety of automated, signature-based detectors and secondarily via manual investigation by security analysts.
Typically large cyber operations (e.g., at government facilities) have widespread collection and query capabilities for an enormous amount of logging and alert data. 
For example, at the network level, firewalls and intrusion detection/prevention systems (IDS/IPS)  such as Snort\footnote{\url{https://www.snort.org/}} 
produce logs, warnings, and alerts that are collected, in addition to the collection of network flow logs, and sometimes full packet captures are stored and/or analyzed.
\begin{wraptable}{r}{0.23\textwidth}
    \vspace{-.25cm}
    \begin{tabular}{cc}
    \multicolumn{2}{c}{\textbf{\small{Flow Record Example}}}\\
    \toprule    
    \textbf{\small{Time}} & \small{09:58:32.912}\\
    \textbf{\small{Protocol}} & \small{tcp} \\
    \textbf{\small{SrcIP}} & \small{192.168.1.100} \\
    \textbf{\small{SrcPort}} & \small{59860} \\
    \textbf{\small{DstIP}} & \small{172.16.100.10} \\
    \textbf{\small{DstPort}} & \small{80} \\
    \textbf{\small{SrcBytes}} & \small{508526} \\
    \textbf{\small{DstBytes}} & \small{1186562} \\
    \textbf{\small{TotBytes}} & \small{1695088} \\
    \bottomrule
	\end{tabular}
	\caption{Flows record the metadata of IP-IP communications.} 
	\squeezeup
\end{wraptable} 
Additionally, situational awareness tools such as Nessus\footnote{\url{https://www.tenable.com/products/nessus-vulnerability-scanner}} provide lists of software, software version, and known vulnerabilities for each host. 
Host-based IDS/IPS such as  
 McAfee\footnote{\url{https://www.mcafee.com/us/index.html}} 
anti-virus (AV) software 
 and AMP\footnote{\url{http://www.cisco.com/c/en/us/products/security/advanced-malware-protection/index.html}} 
report alerts to cyber security operations, in addition to situational awareness appliances. 
Hence, security analysts now have access to multiple streams of heterogeneous sources producing data in high volumes. 
As an example, a large enterprise network operation, with which we collaborate, monitors only a portion of their network flow logs, a volume of 4-7 GB/s,
in addition to many other logging and alerting tools employed. 
Consequently, manual investigation and automated processing of data/incidents must manage the large bandwidth. 

While the first line of defense is 
signature-based methods (e.g., AV, firewall), which operate by matching precise rules that identify known attack patterns, 
their false negative rate is problematic.
 In response there is a large 
body of literature to use anomaly detection (AD) systems for protection~\cite{ ahmed2007multivariate, axelsson2000base, christodorescu2007can,        ferragut2011automatic,  ferragut2012new, fontugne2010mawilab, garcia2014empirical,   harshaw2016graphprints, lakhina2004diagnosing, moore2017modeling, pevny2012identifying, rehak2009multiagent, scarfone2007guide, sexton2015attack, tandon2009tracking, thomas2010rapid}. 
AD provides a complimentary monitoring tool that holds the promise of detecting novel attacks by identifying large deviations from normal behavior, and this concept has been proven in many of the previous works. 
Ideally, accurate detection with a low number of false positives is achieved. 
\textit{At a minimum, an AD IDS should isolate a manageable subset of events that are sufficiently abnormal to warrant next-step operating procedures}, such as, passing alerts to a downstream system (e.g., automated signature generator) or to an operator for manual investigation. 

While AD has garnered much research attention, such algorithms are met with many challenges when used in practice in the cyber security domain. 
How to design AD models for accuracy\textemdash exploring what statistics, algorithms, and data representations to use so that the detected events correspond with operator-defined positives\textemdash is the focus of many previous works~\cite{ahmed2007multivariate, axelsson2000base, christodorescu2007can,  ferragut2011automatic,  ferragut2012new, fontugne2010mawilab, garcia2014empirical,   harshaw2016graphprints, joslyn2014discrete, sexton2015attack, tandon2009tracking, thomas2010rapid} and in deployment is likely a network-specific task leveraging both domain expertise (understanding attacks, protocols, etc.) and tacit environmental knowledge (understanding configuration of network appliances and their behaviors). 
Common trends to increase accuracy involve the use of ensembles of heterogeneous detectors~\cite{christodorescu2007can, thomas2010rapid, ferragut2011automatic, ferragut2012new, fontugne2010mawilab, garcia2014empirical, kriegel2011interpreting, schubert2012evaluation, sexton2015attack} and/or online detection models that adapt in real time and/or upon observations of data~\cite{ahmed2007multivariate, bridges2015multi, bridges2016multi, ferragut2012new, harshaw2016graphprints}. 
In practice, the need for multiple detectors is enhanced by the diversity in network components (models conditioned on each host, subnet, etc.),  data types (models conditioned on flow data, system logs, etc.), and features of interest (rate, distribution of ports used, ratio of data-in to data-out, etc.).
E.g., patents of Ferragut et al.~\cite{ferragut2016detection, ferragut2016real} detail AD systems using a fleet of dynamic models and producing near real-time alerts on high volume logging data. 

\subsection{Problem Addressed} 
In this work we do \textit{not} present novel methods for accurate detection of intrusions. 
Rather, we address a difficult but important question for AD in IDS applications, namely: 
How should the alert threshold be set in the case of a large number of heterogeneous detectors, possibly changing in real time, that are producing alerts on high-volume data?
Our organization's cyber operations' analysts have arrived at the problem of alert rate regulation from three scenarios that all require real-time prioritization of alerts that can accommodate influxes of data as well as the multitude of evolving models, namely, 
(1) manual alert investigation requires an online way to triage events; 
(2) data storage limitations, e.g., storing
packet captures (PCAPs)  from the most anomalous traffic,  requires a real-time algorithm for prioritizing events  as ``anomalous enough'';
(3) online automated alert processing (e.g.,
automated signature generation of anomalous activity) cannot handle influxes, i.e., downstream systems
require alert rate regulation to prevent a denial-of-service.

To illustrate the problem, consider the AD system and Skaion data used  in Section~\ref{sec:skaion}. 
This AD system scores each flow using two evolving probability models per internal IP, totaling about 2,500 dynamic detectors.  
It is simply not feasible to manually tune the threshold for each model, and even so, since the models are changing in real time, reconfiguration would be periodically necessary. 
Furthermore, the consequences of misconfigured thresholds are substantial. 
Altogether, the system produces a collective $\approx$2M anomaly scores in about five hours; hence, a threshold that is only slightly too high can produce tens of thousands of alerts per hour!  
Moreover, this dataset is small compared to many networks, and the detection ensemble grows linearly with the number of IPs to model (network size). 

The specific problem of how to set the alert threshold for detection systems in these very realistic scenarios is difficult, underdeveloped, and when not properly addressed, leaves anomaly detectors useless, as their goal is to isolate the most abnormal events from the sea of data.  
Hence, we arrive at our problem of interest.  
How does an operator set the threshold for an AD system, given that the method must adequately accommodate a multitude of possibly adaptive models operating on possibly variable speed, high volume data? 
Second, our work contributes to the related problem of detecting drift of adaptive models over time.

\subsection{Background \& New Contributions} 
This alert-rate regulation problem is first specified by Ferragut et al.~\cite{ferragut2012new}, and they note that a principled notion of quantitative comparability across detectors is necessary to deal with multiple/dynamic models. 
Their relevant contribution is twofold. 
(1) By assuming data is sampled from an accessible probability distribution, they formulate a definition equivalent to Defn.~\ref{defn:a-score}. 
The upshot is that anomalies are events with low p-values (see Defn.~\ref{defn:p-value}), and this technique provides a distribution-independent, comparable anomaly score. 
(2) They provide a theorem equivalent to Lemma~\ref{lem:alert-rate} for alert-rate regulation. 
No alert rate algorithm nor experimental testing of the theorem's consequences are presented.

 Kreigel et al.~\cite{kriegel2011interpreting} have addressed the problem of comparing multiple outlier detection methods by manually crafting transformation functions that convert the given output to a score to a comparable output in the interval [0,1].
 This is only done for a handful of outlier detection algorithms, indicating the obvious drawback of this approach, the necessity to manually investigate each model. 

For numerical time-series data, Siffer et al.~\cite{siffer2017ads} exploit the extreme value theorem to find distribution-independent bounds on the rate of extreme (large) values. 

Our contributions build on the work of Ferragut et al.~\cite{ferragut2012new} both in extending the mathematics and in converting these theorems into an operationally-viable solution to the problem. 
New theorems of Section~\ref{sec:math}  provide further mathematical advancements pertinent to understanding the relationship between the p-value threshold and the likelihood of an alert.  
These results informs an operational workflow.
Given (1) a detection capability that uses a probability distribution to score low p-value events as anomalies and (2) knowledge of the data's rate, the operator has a principled, distribution independent method for setting the threshold to regulate the number of alerts produced. 
Hence, the algorithm can be applied to an ensemble of possibly dynamic, heterogeneous detectors to prevent overproduction of alerts. 
Notably, the system will not suppress influxes of anomalies, but asymptotically the operator-given bound is respected. 
Our math results give hypotheses that ensure equality in the theorem;  operationally, this is the case when users can specify, rather than just bound, the number of alerts. 
As the theorems hold independent of the model (distribution) used, operators can set the threshold a priori. 
In particular, it remains valid in a streaming setting, where the detection model is updated in real time to accommodate new observations.
Because the underlying assumption is that future observations are sampled from the model's distribution, the alert rate regulation will fail if the model distribution differs from the actual distribution.
Hence, the theorem's contrapositive gives an operational benefit, namely that violations of the operator-given alert rate indicate that the anomaly  detection model is not a good fit to the data. 
In this case a period of relearning the distribution is necessary for the threshold-setting algorithm to remain effective. 

We present empirical experiments testing our method in setting the alert rate in two scenarios, both using detectors on network flow data. 
The first (Section~\ref{sec:skaion}) shows the efficacy of setting the threshold on approximately 2,500 simultaneous, dynamically updated detectors. 
In this scenario, multiple anomaly scores are computed per flow; hence, the data rate is high and varies according to network traffic.
The results show that the mathematics give a method for regulating the alert rate of dynamic detectors that is a priori (in the sense that no knowledge of the specific distribution is necessary for threshold configuration). 
Our second experiment (Section~\ref{sec:graphprints}) uses data from the network AD paper of Harshaw et al.~\cite{harshaw2016graphprints}, which fits a Gaussian to a vector describing the real network traffic every 30 seconds. 
Hence, it is a single, fixed rate detector. 
The results from this application illustrate the analytic capabilities for gauging model fit that are made possible by the theorems we develop. 

Altogether our work gives new mathematical results regarding the p-value distribution. 
This informs an algorithm that poses an alternative\textemdash Operators can accurately set the threshold of detection ensembles to bound the expected number of alerts or identify a misfit of the detection model.  

\section{Mathematical Results}
\label{sec:math}
In this section we present mathematical assumptions and results that are the foundation for bounding the alert rate of an anomaly detector. 
To proceed, we consider probabilistic anomaly detectors, which score data's anomalousness according to a probability distribution describing the data. 
We leverage the probabilistic description to provide a theorem that gives sharp bounds on the alert rate in terms of the threshold, independent of the distribution. 
This gives an a priori method to manage the expected number of alerts for any distribution. 
As the mathematics is presented with the necessary but possibly abstruse formality and rigor, we include easy-to-understand examples illustrating the results, their implications, and limitations. 

\subsection{Setting and Notation}
\label{sec:notation}
Our setting assumes data is sampled independently from a distribution with probability density function (PDF) $f: X \to [0,\infty)$. 
The ambient space, $(X, \mathfrak{M}, m),$ is assumed to be a $\sigma$-finite measure space with measure $m$ and $\mathfrak{M}$ the set of measurable subsets of $X.$ 
We define the probability of a measurable set $S\in \mathfrak{M}$ to be $P_f(S):= \int_S f dm$; i.e., $P_f$ denotes the corresponding probability measure with Radon-Nikodym derivative $f$. 
For almost all applications, $X$ is either a subset of $\mathbb{R}^n$ with $m$ Lebesgue measure, or $X$ is a discrete space with $m$ counting measure.

We say an anomaly score $A(x)$ respects the distribution $f$ if $A(x) \geq A(y)$ if and only if $f(x) \leq f(y)$\textemdash intuitively, $x$ is more anomalous than $y$ if and only if $x$ is less likely than $y$.  
While this can be attained by simply letting $A = h\circ f$ for a decreasing function $h$ (for example, see Tandon and Chan~\cite{tandon2009tracking} where $A:= -\log_2(f)$ ), the work of Ferragut et al.~\cite{ferragut2012new} notes that such an anomaly score inhibits comparability across detectors. 
That is, because such a definition puts $A$ in one-to-one correspondence with the values of $f$, anomaly scores can vary wildly for different distributions.  
The consequence is that setting a threshold is dependent on the distribution, which is problematic especially for two settings that are common. 
The first is a setting where multiple detectors (e.g., detectors for network traffic velocity, IP-distribution, etc.) are used in tandem as each require different models.
For instances in the literature using cooperative detectors see~\cite{christodorescu2007can, thomas2010rapid, ferragut2011automatic, ferragut2012new, fontugne2010mawilab, garcia2014empirical, kriegel2011interpreting, schubert2012evaluation, sexton2015attack}.   
The second setting is when dynamic models (where $f$ is updated upon new observations) are used, as this requires comparison of anomaly scores over time. 
For examples of streaming detection scenarios see~\cite{ahmed2007multivariate, bridges2015multi, bridges2016multi, ferragut2012new,  harshaw2016graphprints}.  

To circumvent this problem, we follow Ferragut  et al.~\cite{ferragut2012new} by assuming observations are sampled independently from $f$, a PDF, and define anomalies as events with low p-value (as do many detection capabilities).  
For any distribution the p-value gives the likelihood relative to the distribution. 
Hence, it always takes values in $[0,1]$, and in the specific case of a univariate Gaussian is just the two-sided z-score.  
\defn [P-Value]
	\label{defn:p-value}
	The {\it p-value} of $x\in X$ with respect to distribution $f$, is denoted $pv_f: X\to [0,1]$ and is defined as 
	$$ pv_f(x): = \int_{\{t: f(t)\leq f(x) \} } f dm  = P_f(\{t: f(t)\leq f(x) \} ).$$
\edefn 
It is clear from the definition that $pv_f(x) > pv_f(y) $ if and only if $x$ is more likely than $y$, since $f\geq 0$. 
Finally, in order to define an anomaly score, simply compose a decreasing function, say, $h$, with the p-value. 
\defn [Anomaly Score] 
	\label{defn:a-score}
	An anomaly score that respects a distribution $f$ is of the form  $A = h\circ pv_f(x)$, for strictly decreasing $h: [0,1] \to [0,1]$.  
\edefn
Since $h$ can be any strictly decreasing function, $h(x) = (1-x)$ is a natural choice for simply inverting [0,1] so that low p-values (anomalies) get high scores and conversely. 
For the theorems that follow, we use both the p-value threshold denoted by $\beta$, and the  corresponding anomaly score threshold is simply $\alpha:= h(\beta)$.


\subsection{Theorems}
\label{sec:theorems}
In this section we present the mathematical results that make precise the relationship between an anomaly threshold and the likelihood of an alert. 
Theorem~\ref{thm:alert-rate} and ensuing corollaries 
give sharp estimates for bounding the alert rate 
in terms of the threshold.  
	



\begin{lemma}
	\label{lem:alert-rate}
	Let $f$ denote a probability distribution.  For all $\beta \geq 0,$ 
	\begin{align*} 
	P_f(\{x: pv_f(x) \leq \beta\}) & \leq \beta \\
	P_f(\{x: pv_f(x) > \beta\}) & \geq 1 - \beta.
	\end{align*}
	Furthermore, equality holds in both if and only if $\beta = \sup \{x\in X : pv_f \leq \beta\}$.  
\end{lemma}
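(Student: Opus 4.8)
\emph{Proof strategy.} The plan is to pin down the structure of the sublevel set $S_\beta := \{x\in X : pv_f(x)\le\beta\}$ and then squeeze its $P_f$-measure with an elementary continuity-of-measure argument. In fact I would prove the exact identity
\[
P_f(S_\beta)\;=\;\sup\{\,pv_f(x) : x\in X,\ pv_f(x)\le\beta\,\},
\]
from which both displayed inequalities follow immediately — the first because every term in the supremum is $\le\beta$, and the second because $P_f(\{x: pv_f(x)>\beta\}) = 1 - P_f(S_\beta)$ — and from which the equality characterization drops out as well. (I read the stated condition ``$\beta = \sup\{x\in X : pv_f\le\beta\}$'' as ``$\beta$ equals the supremum of $pv_f$ over $\{x : pv_f(x)\le\beta\}$,'' i.e. $\beta$ lies in the closure of the range of $pv_f$ below $\beta$; with the convention $\sup\emptyset = 0$ this also handles $S_\beta=\emptyset$. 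Note the two equalities are equivalent, since one measure is $1$ minus the other.)

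First I would use the monotonicity already noted after Definition~\ref{defn:p-value}: $f(t)\le f(x)$ forces $\{s: f(s)\le f(t)\}\subseteq\{s: f(s)\le f(x)\}$ and hence $pv_f(t)\le pv_f(x)$. Thus $S_\beta$ is downward closed in the value of $f$ (if $x\in S_\beta$ and $f(t)\le f(x)$ then $t\in S_\beta$), so $S_\beta=\bigcup_{x\in S_\beta}\{t: f(t)\le f(x)\}$, a union of nested sets whose $P_f$-measures are exactly the numbers $pv_f(x)$, $x\in S_\beta$. Next I would extract a countable cofinal chain: set $c := \sup\{f(x): x\in S_\beta\}$ and pick $x_n\in S_\beta$ with $f(x_n)\uparrow c$, so that $pv_f(x_n)$ is non-decreasing. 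If $c$ is attained by some $x_0\in S_\beta$, then $S_\beta = \{t: f(t)\le c\}$ and $P_f(S_\beta) = pv_f(x_0) = \sup_{x\in S_\beta} pv_f(x)$ by the maximality of $f(x_0)$. If $c$ is not attained, then $S_\beta = \{t: f(t)<c\} = \bigcup_n \{t: f(t)\le f(x_n)\}$ is an increasing union, so continuity of $P_f$ from below gives $P_f(S_\beta) = \lim_n pv_f(x_n) = \sup_n pv_f(x_n)$; and this supremum equals $\sup_{x\in S_\beta} pv_f(x)$ because every $x\in S_\beta$ has $f(x)<c$, hence $f(x)\le f(x_n)$ and $pv_f(x)\le pv_f(x_n)$ for all large $n$. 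Either way the displayed identity holds (with the empty case covered by the convention), so $P_f(S_\beta)\le\beta$ with equality precisely when the supremum equals $\beta$, which is the claimed condition.

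The step I expect to be the main obstacle is the measure-theoretic bookkeeping around whether $c=\sup_{x\in S_\beta} f(x)$ is attained and the attendant ``$<$'' versus ``$\le$'' in the level sets — this is where $f$ being unbounded or carrying mass at its essential maximum, and the edge cases $S_\beta=\emptyset$ and $\beta\ge 1$, have to be reconciled with a single clean formula. Once the identity $P_f(S_\beta) = \sup\{pv_f(x) : pv_f(x)\le\beta\}$ is established, everything else is immediate.
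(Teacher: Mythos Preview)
Your proposal is correct and is essentially the paper's own argument: both show $P_f(S_\beta)=\sup\{pv_f(x):pv_f(x)\le\beta\}$ by writing $S_\beta$ as an increasing union of level sets $\{t:f(t)\le f(x_n)\}$ (equivalently $\{t:pv_f(t)\le pv_f(x_n)\}$) and invoking continuity of measure from below, then reading off the inequality and the equality condition. The only organizational difference is that you pick $x_n$ with $f(x_n)\uparrow c$ and split on whether $c$ is attained, while the paper picks $x_n$ with $pv_f(x_n)\nearrow r$ directly and first isolates the ``$\beta$ attained'' case as a lemma; your case split is a bit tidier in handling whether the top level is actually covered by the union, but the content is the same.
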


\begin{proof}
	Suppose for the moment there exists $y\in X$ such that $pv_f(y) = \beta$. 
	Then 
	\begin{align*} 
	\{pv_f \leq \beta\} & = \{x: pv_f(x) \leq pv_f(y)\}\\
						& = \{x: f(x)\leq f(y)\}.
	\end{align*} 
	It follows that 
	\begin{align}
	\label{eqn-sharp} 
	P_f(\{x: pv_f(x) \leq \beta \})	& = P_f(\{x: f(x) \leq f(y) \} ) \nonumber \\
								 	& = pv_f(y) = \beta. 
	\end{align} 
	Hence we have equality in this case, which shows the inequalities are sharp, once proven. 

	To prove the inequality let 
	$r = \sup \{x\in X : pv_f \leq \beta\}.$
	There exists $x_n \in X$ such that $pv_f(x_n) \nearrow r,$ hence 
	$$\{x: pv_f(x) \leq \beta\} = \bigcup \{x: pv_f(x) \leq pv_f(x_n)\}.$$ 
	Since the sets on the right are a nested, increasing family, we have 
	\begin{align*}
	P_f(\{x: pv_f(x) \leq \beta\})& = \lim_n P_f(\{x: pv_f(x) \leq pv_f(x_n)\}) \\ 
		& = \lim_n pv_f(x_n) \text{ \hphantom{aaaaa}\hfill by (\ref{eqn-sharp}) } \\ 
		& = r \leq \beta.
	\end{align*}
	This proves the first inequality, and establishes equality iff $\beta = r = \sup \{x\in X : pv_f \leq \beta\}.$ 
	Finally, 
	$P_f(\{x: pv_f(x) > \beta\}) = 1 - P_f(\{x: pv_f(x) \leq \beta\}) \geq 1 - \beta$.
\end{proof}

\noindent Roughly speaking, the lemma says that if we sample $x$ from distribution $f$, and compute its p-values, $pv_f(x)$, the chance that the $pv_f(x)$ is less than a fixed number $\beta$ is less than or equal to $\beta$. 
The next theorem translates this to the AD setting.

\thm 
	\label{thm:alert-rate}[Alert Rate Regulation Theorem]
	Let $h$ be strictly decreasing so that $A_f = h\circ pv_f$ is an anomaly score that respects the distribution $f$. 
	Let $\alpha$ denote the alert threshold (so $x$ is called ``anomalous'' iff $A_f(x) \geq \alpha$), and set $\beta = h^{-1}(\alpha)$.  
	Let $S \subset X$ be a set of independent samples from PDF $f$. 
	Then the expected number of alerts in $S$ is bounded above by $\beta |S|,$ i.e., 
	$$
	E[\{x\in S: A_f(x) \geq \alpha \}] \leq \beta |S|. 
	$$
\ethm
\begin{proof}
	By definition of $A_f$ and $\beta$, we have 
	\begin{align*}
	E[\{x\in S: A_f(x) \geq \alpha \}]	& = \sum_{x\in S} P_f(\{ A_f(x) \geq \alpha \})\\
										& = \sum_{x\in S} P_f(\{pv_f(x)\leq \beta \})\\
										& \leq \beta |S|, 
	\end{align*}
	with the last inequality provided by the Lemma. 
\end{proof}

\cor
\label{cor:eq-1}
If $pv_f:X\to[0,1]$ is surjective, then equality holds in the preceding theorem, lemma for all $\beta$.
\ecor

\cor
\label{cor:continuous}
If $X$ is a connected topological space, $f$ is not the uniform distribution, and  $pv_f$ is continuous, then $pv_f(X) = [a,1]$ for some $a\in [0,1]$; hence, equality holds in the preceding theorem and lemma for all $\beta \in [a,1)$. 
\ecor

\cor
\label{cor:eq-2}
Suppose $X$ is a topological space and, for all $ y > 0,$ $m \{x : f(x) = y \} = 0.$ Then equality holds in the preceding theorem and lemma. 
\ecor

\begin{proof} 
	Let $x_0, x_1\in X$ such that $0\leq f(x_0) \leq f(x_1)$. Then 
	\begin{align} 
	\label{eqn:cor-estimate}
	0 	& \leq pv_f(x_1) - pv_f(x_0) \nonumber \\
		& = \int_{\{ t: f(t)\in (f(x_0), f(x_1)]  \}} f dm \nonumber \\
		& \leq \| f \|_1 m \{ t: f(t)\in (f(x_0), f(x_1)]  \} \\ 
		& = m \{ t: f(t)\in (f(x_0), f(x_1)]  \} \text{ \hphantom{aaa}\hfill since $f$ a PDF.} \nonumber 
	\end{align}
	By hypothesis, for any $x_1\in X,$
	\begin{align*}
		0 	& = m \{ t: f(t) = f(x_1)\} \\
			& = \bigcap_{ \{ x_0 : f(x_0) < f(x_1)\} } m \{ t: f(t)\in (f(x_0), f(x_1)]  \}\\
			& = \lim_{ f(x_0) \nearrow f(x_1)}  m \{ t: f(t)\in (f(x_0), f(x_1)]  \}  
	\end{align*} 
	by continuity of measures from above. 
	Hence, for all $x_1 \in X$ and $\epsilon > 0$ there exists $x_0 \in X$ such that $m \{ t: f(t)\in (f(x_0), f(x_1)]  \} \leq \epsilon$. 
	
	It follows from Inequality~(\ref{eqn:cor-estimate}) that $ \sup \{x \in X : pv_f(x) < pv_f(x_1)\} = pv_f(x_1).$
	This establishes the condition of Lemma~\ref{lem:alert-rate} for equality with $\beta = pv_f(x_1)$.  
\end{proof}

\subsection{Examples and Explanations} 


See Figure~\ref{fig:f_pvf} depicting a simple trinomial distribution and corresponding p-value distribution.
This distribution's plateau forces a discontinuity in the rate of alerts as a function of the threshold.
\begin{wrapfigure}{r}{0.24\textwidth} 
	\centering
	\includegraphics[scale=0.19]{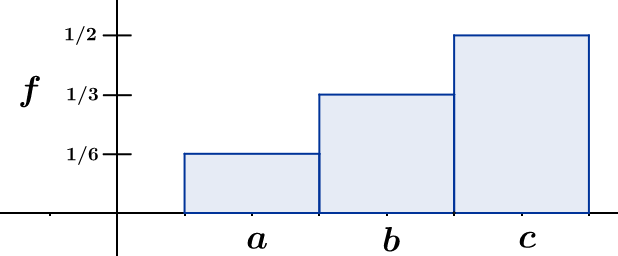}
	\includegraphics[scale=0.19]{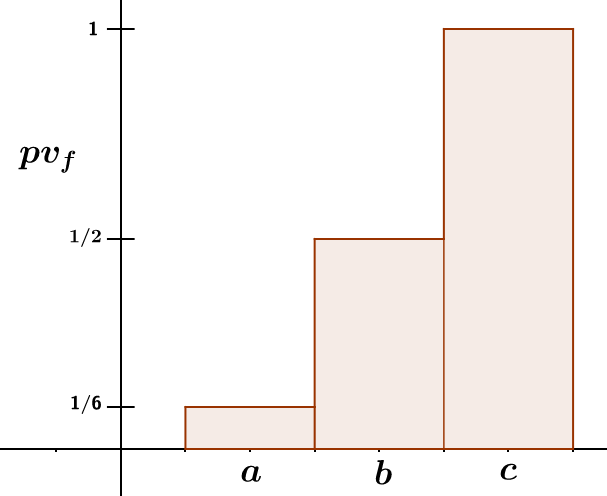}
	\caption[caption]{Trinomial distribution with corresponding p-value distribution. P-value thresholds 1/6, 1/2, and 1 are the only values for which equality holds in the theorems. For these threshold values the expected percentage of alerts are exactly 1/6, 1/2, 1, and, moreover, these are the only percentages possible; e.g., using p-value threshold $\beta\in [0,1/6)$ will yield exactly 0 events and $\beta\in [1/6, 1/2)$ yields an expected 1/6 of the events as alerts. This illustrates a fundamental limitation of PDFs with plateaus. Note that this phenomenon can occur with continuous $f$ as well.}
	\label{fig:f_pvf}
\end{wrapfigure}
In this distribution the operator can either yield exactly none or 1/6\textsuperscript{th} of all events as the threshold changes from below to above $pv_f = 1/6$. 
As the extreme case, consider the uniform distribution in which all events are equally likely/anomalous. 
With the uniform distribution, the operator can yield exactly none or all events. 
Note that the limitation is independent of the method for choosing the threshold and poses a general problem for AD. 
This limitation appears in our experiments with real data. 
	
	

Corollaries~\ref{cor:eq-1}, \ref{cor:continuous}, and \ref{cor:eq-2}  are crafted to identify when this limitation is not present. 
As a simple example, consider the standard normal distribution, Figure~\ref{fig:normal}. 
Regarding the plot of the corresponding p-value distribution is easy to see continuity. 
Since $pv_f(x) = 2F(-|x|)$, where $F$ is the cumulative distribution function (CDF), it follows from Corollary~\ref{cor:continuous} that we have equality in Theorem~\ref{thm:alert-rate} for all $\beta\in[0,1].$ 
The same result follows from Corollary~\ref{cor:eq-2} and the fact that $f$ has no plateaus. 
Hence, the equality condition for all threshold values means that one can specify the expected number of alerts; quite explicitly, if one desires exactly the most anomalous 1/1000\textsuperscript{th} of the data, then simply setting the p-value threshold to $\beta = 0.001$ guarantees the result. 
Using the contrapositive, we see that if the model $f$ admits equality for some p-value threshold, $\beta$, then an average number of alerts above/below $\beta$\% indicates that the tails of $f$ are too small/big, respectively. 
Without equality one can only detect tails that are too thick. 

Finally, we note that while these examples are simple distributions chosen for illustrative purposes, the theorems hold under the specified, very general hypotheses. 
All that is needed is a known measure $m$ for which the probability measure is absolutely continuous.

\subsection{Alert Rate Regulation Algorithm}
\label{sec:algorithm}
Under the assumption that data observations are samples from our distribution, we are mathematically equipped to design an algorithm that exploits the relationship between the alert rate and the threshold to prevent an overproduction of alerts. 
To illustrate this, suppose we receive $N$ data points per time interval $\delta_t$ (e.g., per minute), but operators only have resources to inspect the most anomalous $M \leq N$ in each time interval. 
\begin{wrapfigure}{l}{0.24\textwidth} 
    \vspace{-.1cm}
	\centering
	\includegraphics[scale=0.24]{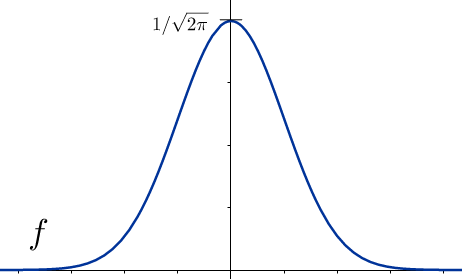}    
	\includegraphics[scale=0.3]{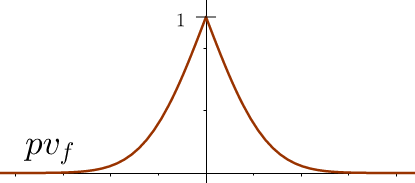}
	\caption[caption]{The standard normal distribution PDF is depicted with corresponding p-value distribution. In this case, the p-value is continuous, and the issue faced by the aforementioned trinomial distribution is avoided. Operationally, this means for any specified percent $p$, a threshold can be set to isolate the most anomalous $p\%$ of the distribution. \newline}
	\label{fig:normal}
	\vspace{-.5cm}
\end{wrapfigure}
Let $f_n$ be a PDF fit to all previous observations $\{x_1, ..., x_{n-1}\}$. 
Following the assumption that the next observation, $x_n$, will be sampled according to $f_n$, define the anomaly score, $A_n := h\circ pv_{f_n}$ where $h$ is a fixed, strictly decreasing bijection of the unit interval.  
Upon receipt of $x_n$ an alert is issued if $A_n(x_{n}) \geq \alpha$.
Equivalently, if $pv_{f_n}(x_n) \leq h^{-1}(\alpha) =: \beta$.   
Finally, we update $f_n$ to $f_{n+1},$ now including observation $x_n$ and repeat the cycle upon receipt of the next observation.  
Leveraging the theorem above, the expected number of alerts per interval is 
$ \sum_{n = 0}^{N-1} P_{f_n}( A_{n}(x_{n}) \geq \alpha ) \leq N h^{-1}(\alpha).$
Hence, choosing $\alpha = h(M/N)$ (equivalently, flagging if the p-value is below $\beta = M/N$) ensures that the operator's bound on the number of alerts will hold on average.

The method above is for a fixed time interval, or for constant rate data. 
As the speed of the data may vary, we now adapt the above method to dynamically change the alert rate to accommodate variable data speed.  
Let $t_i$ denote arrival time of $x_i$, and let $r$ (alerts per second) be the user-desired upper bound on the alert rate (the analogue on $M$).  
Next, for each time interval we periodically estimate the rate of data by letting $r_k = |\{x_i: t_i \in  [ (k-1)*\delta_t , k*\delta_t) \}|/\delta_t$, so that $r_k$ gives the number of observed events over the $k$\textsuperscript{th} $\delta_t$-length interval. 
Hence $r_k$ is a periodically computed, moving average data speed. 
Alternatively, one could compute $r_n$ on each data point's wait time, i.e.,  $r_n := 1 / (t_n-t_{n-1})$, which could experience much more variance.  
Finally, the new threshold at time $n$ shall be given by $\alpha_k = h( r / r_k )$ for the $k$\textsuperscript{th} interval, or equivalently, the p-value threshold is $\beta_k = r/r_k.$  
This new threshold is used for classifying $x_n$ as soon as $x_n$ is observed. 
This algorithm can be called at each iteration of a streaming algorithm to regulate the alert rate. 

Note that this choice of threshold depends only on the rate of the data ($N$ or $r_k$), and the operator's bound on the alert rate ($M$ or $r$). 
In particular, it is independent of the distribution, and therefore can be set a priori,  regardless of the distribution. 
This is especially applicable in a streaming setting, where $f_n$ is constantly changing. 
Next, this is not a hard bound on the number of alerts per day, but rather bounds the alert rate in expectation. 
The operational impact is that if there is an influx of anomalous data, all will indeed be flagged, but on average the alert rate will be bounded as desired. 
Consequently, it is possible to have greater than $M$ alerts in some time intervals. 
Finally, if one has the luxury of performing post-hoc analysis, such an algorithm is not needed; for example, one can prioritize a previous day's alerts by anomaly score, handling as many as possible. 
Yet, if real time analysis is necessary, e.g., only a fraction of the alerts can be processed, stored, etc., then such an algorithm allows real time prioritization of alerts with the bound preset.

The underlying assumption is that data is sampled from the model's probability distribution, so if this assumption fails, for example, if there is a change of state of the system and/or if $f_n$ poorly describes the observations to come, then these bounds may cease to hold. 
This failure gives an ancillary benefit of the mathematical machinery above, namely, that monitoring the actual versus expected alert rate gives a quantitative measure of how well the distribution fits the data.  
Our work to characterize conditions that ensure equality (Corollaries~\ref{cor:eq-1}, \ref{cor:continuous}, and \ref{cor:eq-2}) serve this purpose. 
Under these conditions, the alert rate bound is an actual equality, and deviations from the expected number of alerts over time (both below and above) indicate a poor model for the data.
On the other hand, when the bound is strict (and equality does not hold), only an on-average over-production of alerts will signal a model that does not fit the data. 
Our experiments on real data illustrate this phenomenon as well.

Finally, we note that the adaptive threshold, $\beta = r/r_k$ with $r_k$ the data rate, is conveniently self tuning for variable speed data, it induces a vulnerability. 
Quite simply, an all-knowing adversary with the capability to increase $r_k$  at the time of attack, can force the threshold to zero, to mask otherwise alerted events. 
On the other hand, this is easily parried with a simply fixed-rate detector on the data rate, i.e., modeling the statistic $r_k$, (e.g., a denial of service flooding detector). 
Note that as $r_k$ is computed each interval, the proposed workaround is a fixed rate detector; hence, the alert rate can be regulated without the vulnerability induced.

\subsection{Impact on Detection Accuracy} 
\label{sec:accuracy}
For high volume situations, the adaptive threshold will reduce the number of alerts during influxes of data. 
Consequently, the true/false positive rates (defined, respectively, as the percentage of positives/negatives that are alerts, and hereafter TPR, FPR) will drop, as the number of alerts (numerator) will be reduced with fixed denominator. 
The effect on the positive predictive value (PPV) also known as precision (defined as the percent of alerts that are positives), will depend on the distribution of anomaly scores to the positive events in the data set. 
In particular, if this distribution is uniform, then precision will be unaffected. 
In this case we note that our theorems give a sharp bound (and ability to regulate) the false detection rate (1-PPV).



	



\section{Empirical Experiments}
\label{sec:experiments} 
We present experiments testing the fixed-rate and streaming threshold algorithms on two data sets, Skaion (synthetic) and GraphPrints (real) flow data. 

\subsection{Skaion Data \& Detection System}
\label{sec:skaion}
To test the alert rate algorithm we implemented a streaming AD system on the network flow data from the Skaion data set. 
See Acknowledgments~\ref{sec:acks} for details on Skaion data source information.   
This data was, to quote the dataset documentation, ``generated by capturing information from a synthetic environment, where benign user activity and malicious attacks are emulated by computer programs.''  
There is a single PCAP file for benign background traffic and a PCAP file for each of nine attack scenarios.  
We utilized ARGUS\footnote{\url{http://www.qosient.com/}} (the Audit Record Generation and Utilization System), an open source, real-time, network flow monitor, for converting the PCAP information into network flow data. 
As the different PCAP files had mutually disjoint time intervals, we created a single, continuous set of flows by offsetting the timestamps from the 5s20 (Skaion label) attack scenario flows to correspond with a portion of the ``background'' (i.e., non-attack, 5b5 Skaion label)  flows, and then shuffling the ambient and attack data together so they are sorted by time.\footnote{
The 5s20 attack scenario, titled ``Multiple Stepping Stones,'' begins with the attacker scanning internet-facing systems, gaining access to one of them using an OpenSSL exploit, then leveraging this to gain access to several systems behind the firewall.
The attacker's initial scan of the internet-facing systems is not subtle, and therefore, produces a large spike in the AD system. 
}


All together our test data set has 681,220 background traffic flows spanning five hours 37 minutes (337 minutes) with 227,962 flows from the attack PCAP file included from the 227\textsuperscript{th} minute onward.
The attack PCAP file includes approximately 20 minutes of data before the attacker initiates the attack.
This data set includes 6,905 IPs of which 1,246 are internal IPs (100.*.*.*). 
To put this in perspective, Section~\ref{sec:graphprints} uses flows from a real network of (only)  $\approx$50 researchers and created twice the Skaion flow volume in half the time. Skaion data is relatively small.  



We implement a dynamic fleet of detectors roughly based on the Ferragut et al.~\cite{ferragut2016real} patent and currently used in operation. 
Specifically, for each internal IP we implement two detectors. 
The first models the previously observed inbound and outbound private ports, numbered 1-2048 (1-1024 for outbound, 1025-2048 for inbound traffic), using a 2048-bin multinomial, and follows the recent publication of Huffer \& Reed~\cite{huffer2017situational} where it is shown that the role of a host can be characterized by the use of private ports in flow data. 
The second models the producer-consumer ratio (PCR), which is defined as (source bytes - destination bytes)/(source bytes + destination bytes). 
Hence, PCR is a metric describing the ratio, data in : data out, per flow and takes a value in the interval [-1,1]. 
This is modeled by a 10-bin multinomial. 
Initially, all bins (in both models) are given a lone count; notationally, with $k$ bins ($k = 2048$ or $10$) $f_0 (i) = 1/k$ for all $i = 1, 2, .., k$. 
Upon receipt of the $n$\textsuperscript{th} observation, the p-value is computed; 
$ pv_{f_n}(x_n) = \sum f_n(i),$ with sum over $\{i \in {1, ..., k} : f_n(i)\leq f_n(x_n) \}$.  
Finally, the model is updated from $f_n$ to $f_{n+1} $ by simply incrementing the count of the bin observed and the denominator. 
Mathematically, this is the maximum a posteriori multinomial given the previous observations and a uniform prior.

\begin{figure*}
\vspace{-0.1cm}
\centering
	\begin{subfigure}{0.49\textwidth}
  		\includegraphics[width=.98\linewidth]{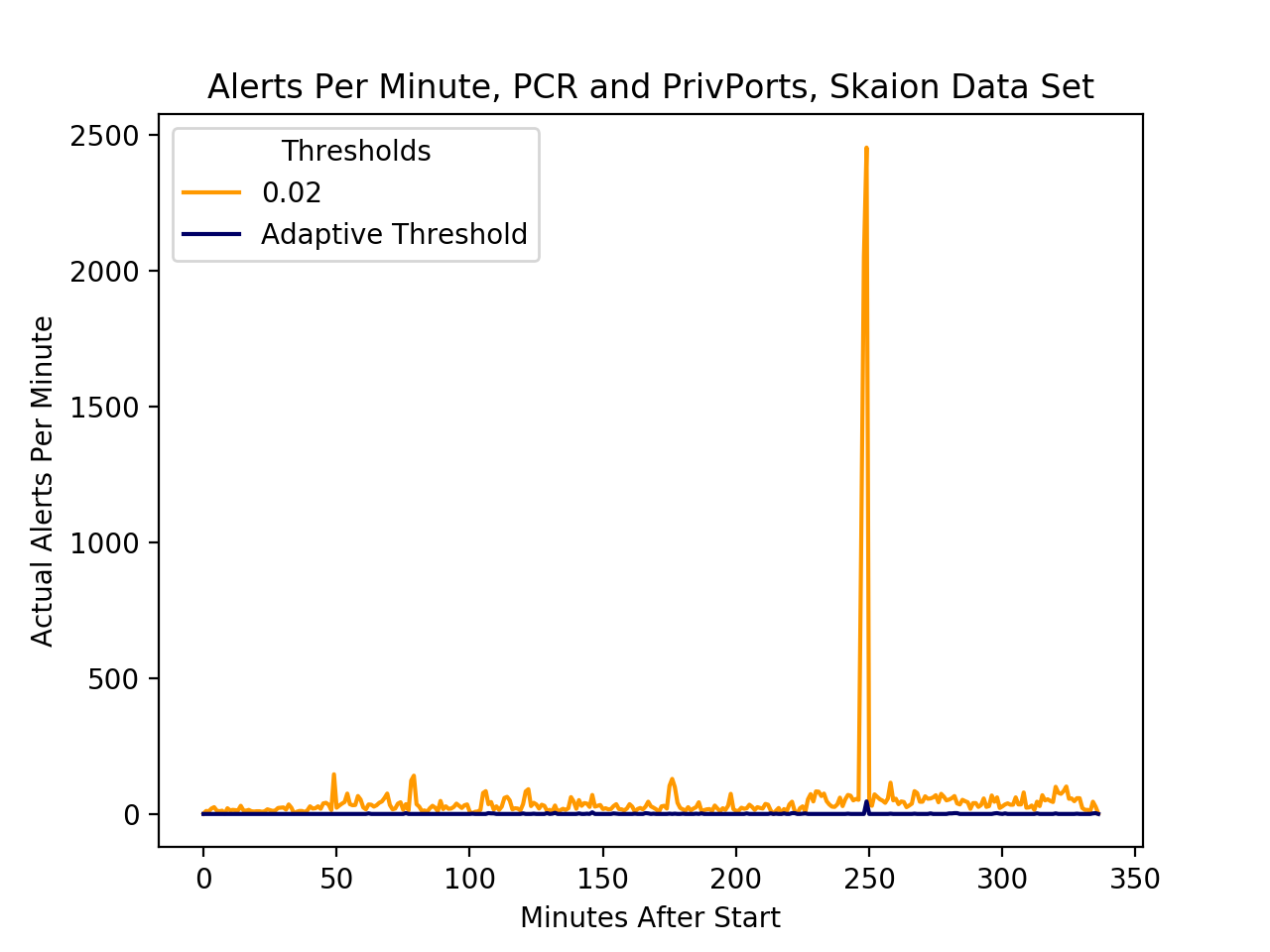}
  		\label{fig:02}
 	\end{subfigure}
 	\begin{subfigure}{0.49\linewidth}
		\includegraphics[width=.98\linewidth]{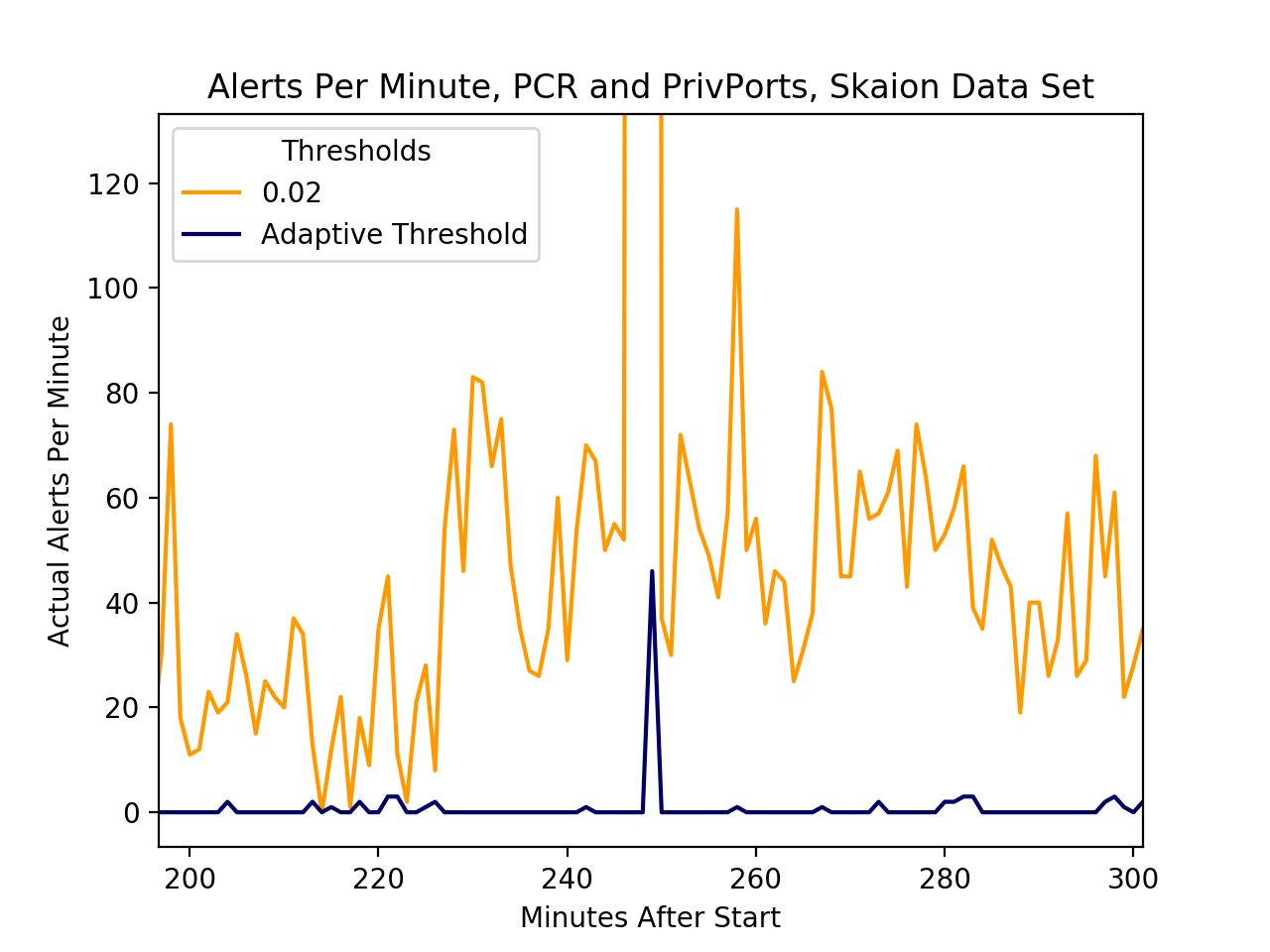}
		\label{fig:02zoom}
	\end{subfigure}
	\begin{subfigure}{0.49\textwidth}
  		\includegraphics[width=.98\linewidth]{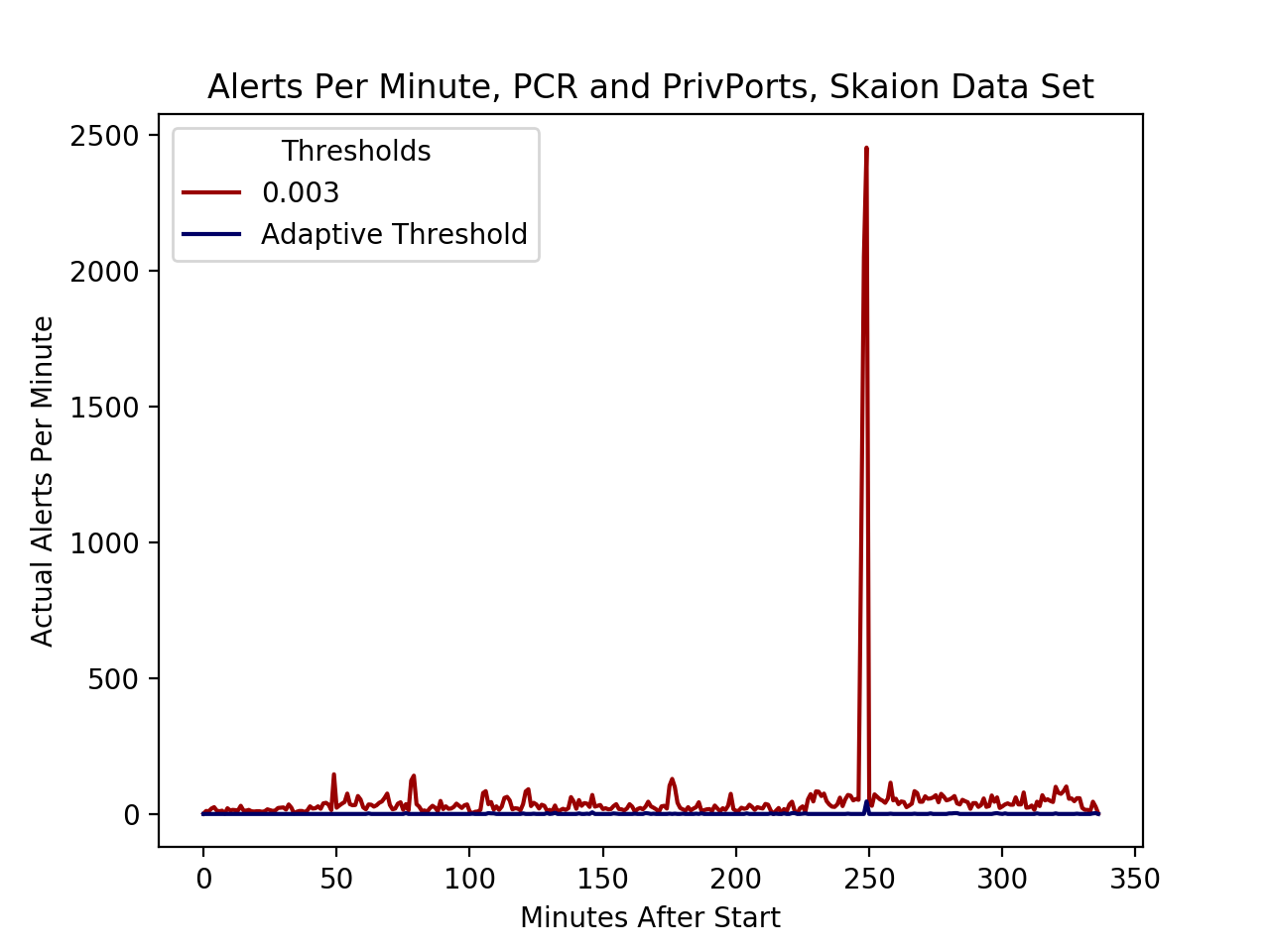}
  		\label{fig:003}
 	\end{subfigure}
 	\begin{subfigure}{0.49\linewidth}
		\includegraphics[width=.98\linewidth]{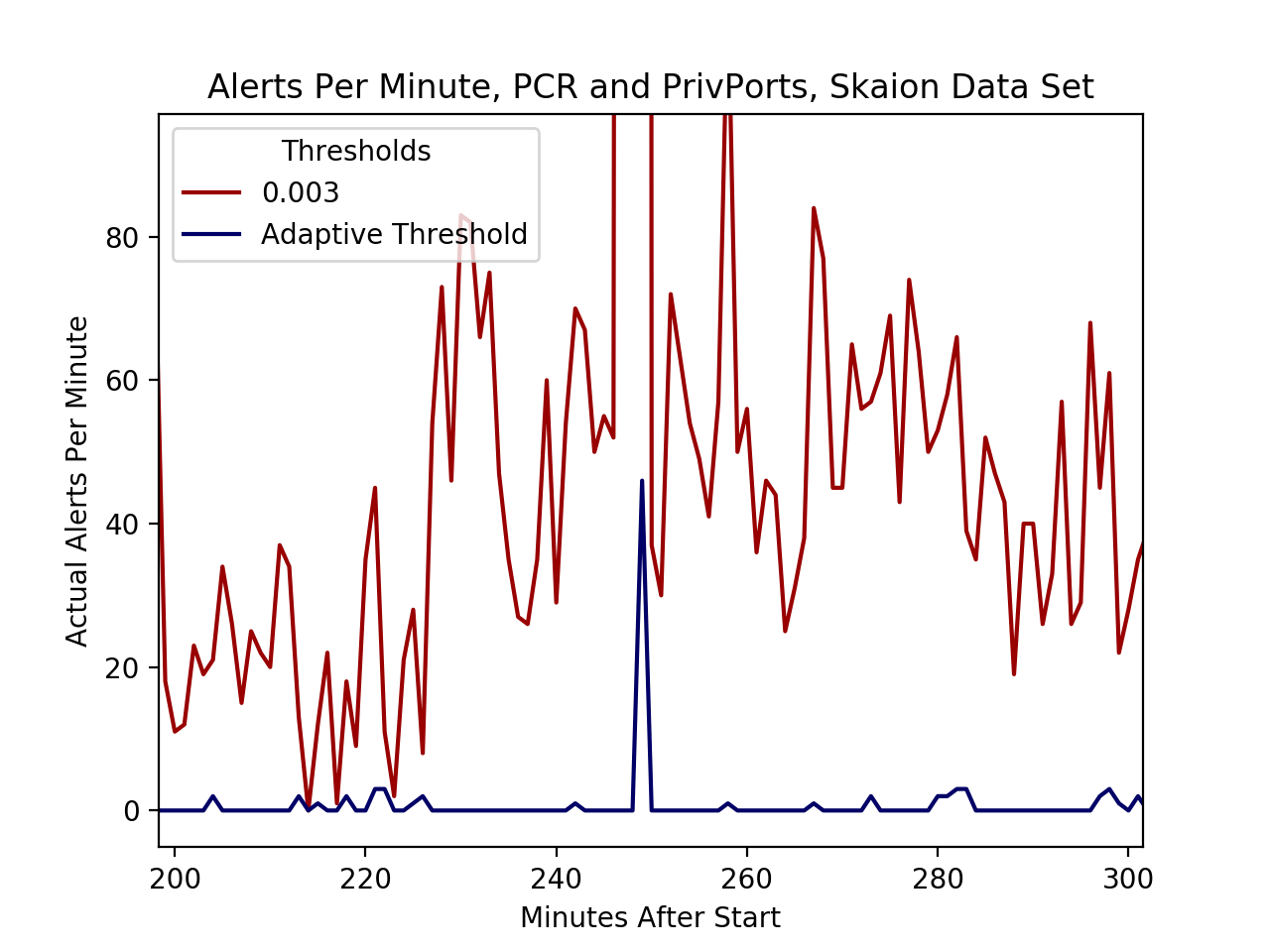}
		\label{fig:003zoom}
	\end{subfigure}
	\begin{subfigure}{0.49\textwidth}
  		\includegraphics[width=.98\linewidth]{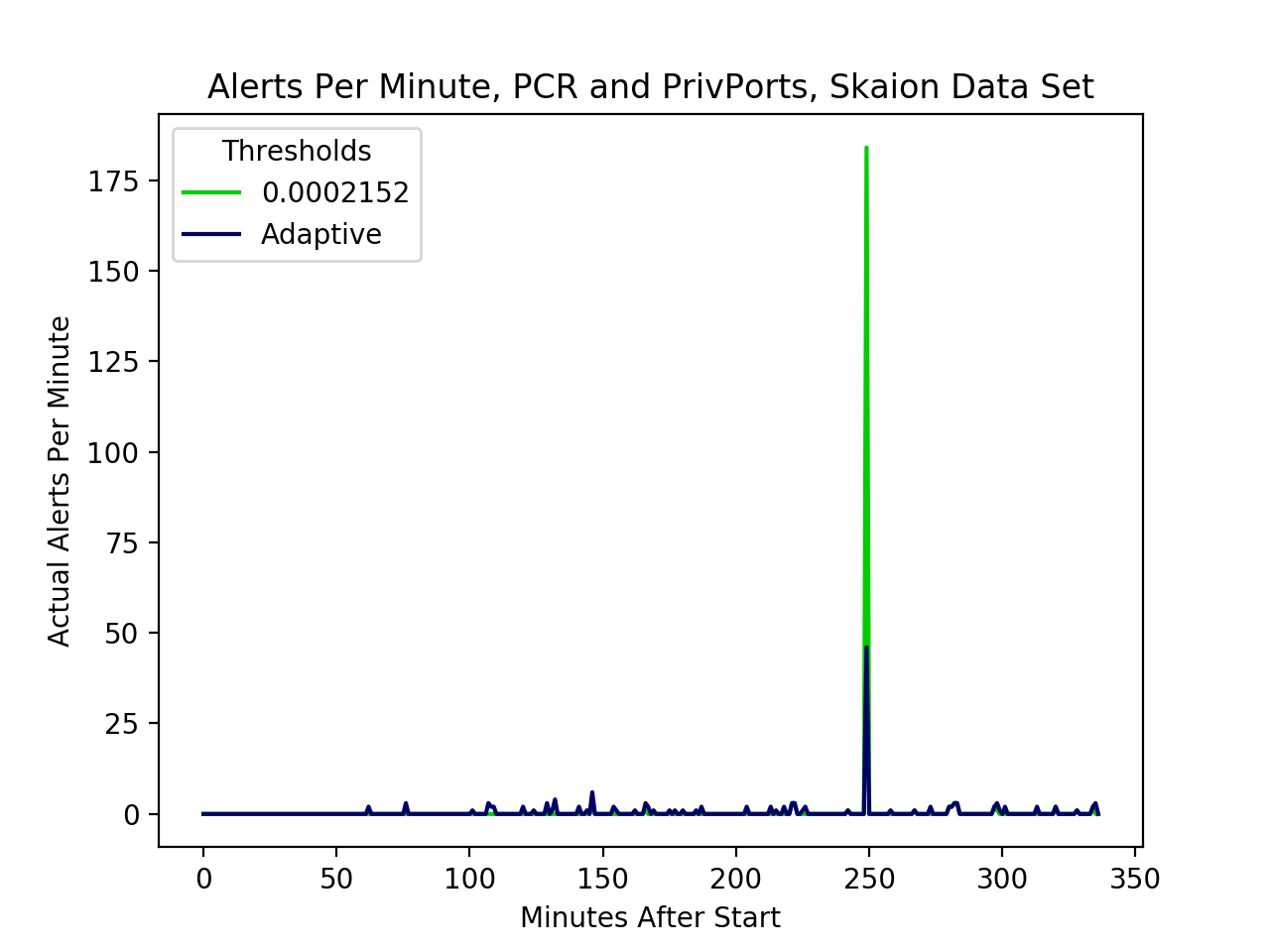}
  		\label{fig:fixed}
 	\end{subfigure}
 	\begin{subfigure}{0.49\linewidth}
		\includegraphics[width=.98\linewidth]{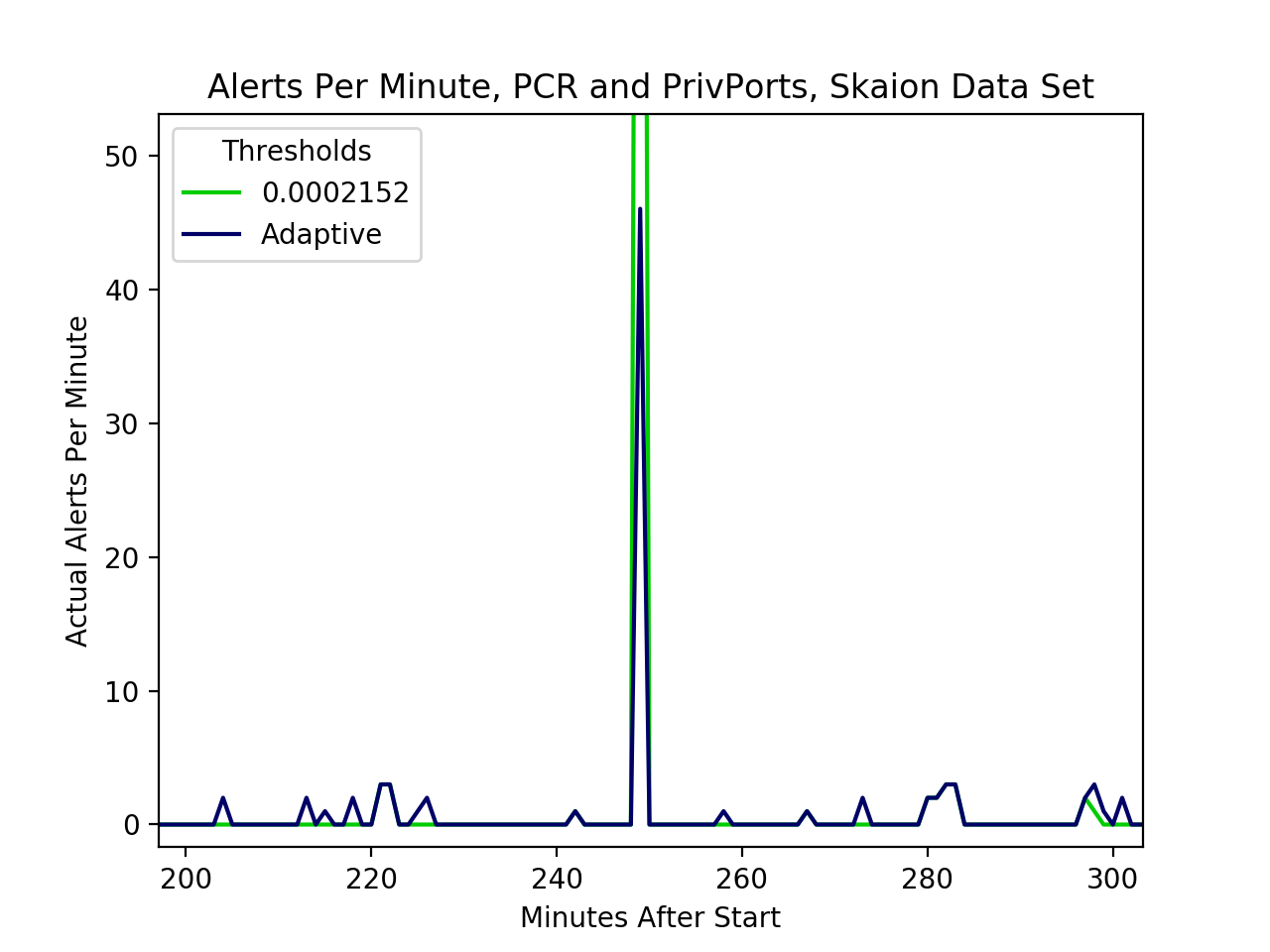}
		\label{fig:fixedzoom}
	\end{subfigure}
\caption{ 
All subplots depict number of alerts per minute ($Y$-axis) against time ($X$-axis) for the adaptive threshold (blue curve in all plots) against a fixed p-value threshold, top row: yellow curve  is $\beta=0.02$, middle row: red curve is $\beta = 0.003$, bottom row: green curve is $\beta = 0.0002152$. 
Left column shows the whole 337 minute dataset; right column is zoomed in to show better resolution. 
Total, the system has 1,246 IPs $\times$ 2 detectors/IP = 2,492 dynamic detectors and  produces 1,565,596 anomaly scores (p-values).
The spike in minute 247 is caused by the attacker initiating a series of unsubtle port scans against internet-facing systems, causing an increased quantity of alerts.   
Fixed thresholds $\beta = 0.02$ and $0.003$ averaging  $\approx50$ and $\approx14$  alerts per minute are used as baselines for comparison. The fixed threshold $\beta = 0.0002152$ is computed using our theorem to satisfy the bound of 1 alert per minute on average and averages $0.64$ alerts per minute. 
Considering the fixed thresholds only,  we conclude that improper choice of the threshold will easily flood the analyst with alerts, but that using a fixed threshold informed by our mathematics will prevent overproduction. 
The adaptive threshold produces 0.43 alerts per minute on average. Considering the bottom row only, we conclude that the adaptive threshold produces less overall alerts than the fixed threshold with the same alert-rate bound, and spreads the alerts more evenly.
}
\label{fig:bigsix}
\end{figure*} 

Altogether, the system has 1,246 IPs $\times$ 2 detectors/IP = 2,492 dynamic detectors and  produces 1,565,596 anomaly scores (p-values) in the 337 minutes of data. 
Our goal in designing these detectors was to create a
\begin{wrapfigure}{r}{0.35\textwidth}
   	\vspace{-.35cm}
    \includegraphics[width=1.1\linewidth]{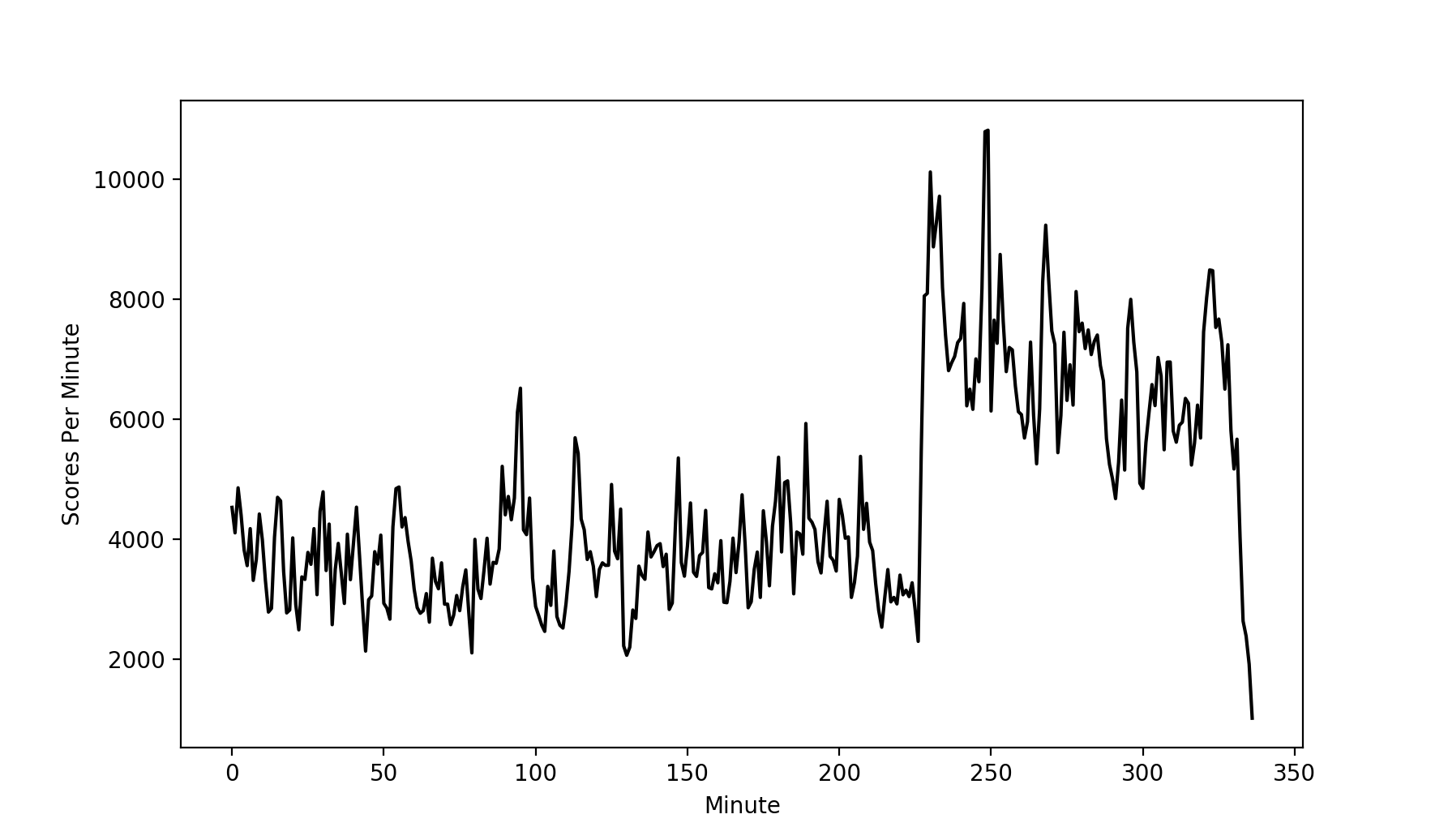}
    \caption{Rate of Skaion flow data, $r_k$, (anomaly scores per minute) plotted. Adaptive threshold (not depicted) has inverse relationship, $r/r_k$, with $r$ the user-given alert rate bound, ($r=1$ in experiments). }
  	\label{fig:r_k}
   	\vspace{-.1cm}
   	\squeezeup
\end{wrapfigure}
realistic detection capability, and models were chosen at the advice of professional cyber security analysts. 
Although the focus of this paper is not on pioneering accurate AD, we quickly note  that the current detectors are able to easily recognize that the initial port scan activity of the attack is anomalous (Fig.~\ref{fig:bigsix}, min. 247). 
Deeper investigation of when and how well these  models are effective is out of scope for this effort. 


\subsubsection{Skaion Data Results}
We test our alert rate threshold analysis against more traditional thresholds. 
Throughout our discussion, reference Figure~\ref{fig:bigsix} giving the number of alerts per minute with various thresholds.  
Some previous works both within intrusion detection \cite{bhaumik2006filtering, cucurull2010disaster,ye2002audit,ye2001chisquare} and in other domains~\cite{patera2008space} using p-value thresholds follow the ``$3\sigma$ rule-of-thumb'' and let 0.3\%= .003 serve as the threshold.  
This follows from the fact that the $3\sigma$-tails of a normal distribution have p-value $0.0026$ or approximately $0.003$.
Others simply choose a value that is near 2\% for an undisclosed reason, or because of a true positive versus false positive analysis in light of labeled events.~\cite{buzen1995masf, campello2015hierarchical, harshaw2016graphprints, lazarevic2003comparative, moore2017modeling, reddy2015acoustic}. 
While such values may be appropriate for hypothesis testing in many situations, for AD on high volume data this unprincipled manner of setting the threshold is inadequate. 
Testing these approaches, we find p-value threshold $\beta = 0.02$  produces an average of $\approx50$ alerts per minute, while $\beta = 0.003$ produces on average $\approx14$ alerts per minute (top two rows of Figure~\ref{fig:bigsix}). The clear conclusion is that the operator will be overwhelmed by alerts with these thresholds. 


We contrast this with the a priori analysis furnished by our theorem. 
Suppose first that our operators can realistically consider $M = 1$ alert per minute on average, and that they know $N\approx 1.5$m scores will be produced.  
With these two figures, we use the theorem to compute the static p-value threshold as follows: 
$1 = E(\mbox{alerts per minute})
= ( 1,565,596/337 )  *P(pv_f(X) \leq \beta)
\leq ( 1,565,596/337 ) *\beta.$
Hence, we set $\beta = M/N = 1/(1,565,596/337) = 0.0002153$. 
This simple calculation shows that the ad-hoc p-value thresholds of $0.003$ and $0.02$ are one and two orders of magnitude too large, respectively. 
Moreover, testing the fixed threshold $\beta = 0.0002153$ yields $\approx 0.64$ alerts per minute. 
See the bottom row of Figure~\ref{fig:bigsix}. 
This shows the efficacy of the alert rate theorem even for fixed thresholds on variable speed data. 

Next, we remove the assumption that the operator knows the number of events ($N$), and  test the dynamically changing threshold with data rate $r_k$ recomputed each minute. 
To do so, we again let the user-defined bound on the average number of alerts be $r = 1$ (following notation of the last Section).  
Each minute we compute the moving average of the data, $r_k$, using the previous minute of data and set $\beta_k= r/r_k = 1/r_k$; i.e., $r_k$ is defined as in Section~\ref{sec:algorithm} with $\delta_t$ = 1 minute. 
See Figure~\ref{fig:r_k} depicting $r_k$. 
Consulting Figure~\ref{fig:bigsix}, we see the adaptive threshold yields about 0-4 alerts per minute except for the one large spike induced by the exceptionally anomalous attack activity. 
On average, the adaptive threshold produces 0.43 alerts per minute, slightly less than the fixed but informed threshold ($\beta = 0.0002152$). 
Finally, consulting the the bottom-row plots of Figure~\ref{fig:bigsix} shows that the distribution of alerts is more spread out with the adaptive threshold.  
Overall, the streaming alert rate regulation algorithm is effective in regulating the nearly 2,500 adaptive detectors with no a priori information. 

\begin{wraptable}{r}{0.3\textwidth}
    \vspace{-.43cm}
    \begin{tabular}{cccc}
    \toprule    
    \textbf{{\small Thresh.}} & \textbf{{\small TPR}} & {\small \textbf{FPR}} & {\small \textbf{PPV}}\\
    {\small .003 }      & {\small .0763 }& {\small .0015} & {\small .7262} \\
    {\small 2.152e-4 }  & {\small .0040 }& {\small 3.48e-5} & {\small .8598} \\
    {\small Adaptive}   & {\small .0021} & {\small 5.33e-5} & {\small .3194} \\
    \bottomrule
    \end{tabular}
	\squeezeup
\end{wraptable} 
Lastly, we present the accuracy metrics for each threshold in the wrapped table. 
As expected, both TPR and FPR both drop when using the adaptive threshold as a result of an overall decrease in alerts. 
Precision (PPV) is significantly lower in the adaptive case. 
To explain this, we regard Figure~\ref{fig:r_k} to see that the data rate nearly doubles on average during the attack times. 
This is caused by the creation of the data set, combining the non-attack Skaion files with the attack Skaion files. 
With the adaptive threshold, the number of alerts will roughly halve during the times when positive examples are included; hence, the drop in precision is an artifact of the simulation.    
In this specific application the adaptive threshold still provides nearly 50 alerts at the onset of the attack. 
More generally, the effect of our threshold algorithm on precision will depend on the distribution of anomaly scores to the attack events.

\subsection{GraphPrints Data \& Detection System} 
\label{sec:graphprints}
\label{sec:exp-graphprints}
We now present experiments of the alert rate developments on the data and network-level anomaly detector (GraphPrints) from the publication of Harshaw et al.~\cite{harshaw2016graphprints}, which was shared with us by the authors. 
In this publication, they chose the tightest threshold that still detects all known positives in the data. 
We will show how our analysis can inform the understanding of the AD model. 


Their work used 175 minutes of network flow data collected from a small office building at our organization, also using ARGUS flow sensor. 
It included flows from approximately 50 researchers using 642 internal IPs plus another $\sim$2,500 internal, reserved IPS (10.*.*.* addresses), and was comprised of 1,725,150 flows. 
The data contained anomalous bittorrent traffic and occasional IP scanning traffic.

The AD method proposed by Harshaw et al., called GraphPrints, uses graph analytics and robust statistics to identify anomalies in the local topology of network traffic. 
The method proceeds in three steps. 
\begin{wrapfigure}{l}{0.25\textwidth} 
    \vspace{-.3cm}
    \includegraphics[width=1.1\linewidth]{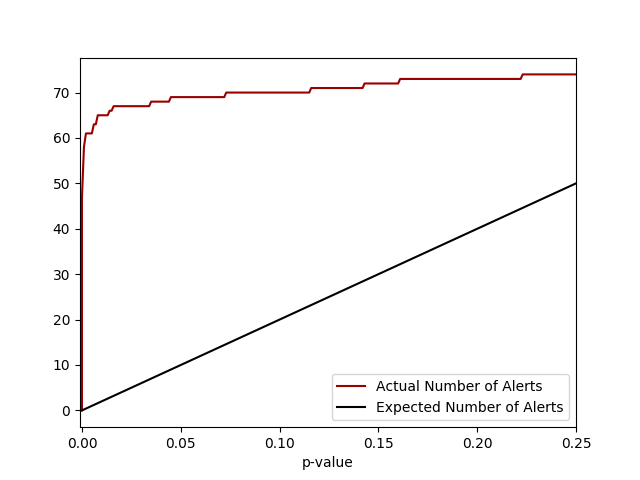}
    \caption{GraphPrints data's actual and expected number of alerts for small p-values. Because a Gaussian distribution was used, the corollaries indicate the two curves should be approximately equal. The extreme disparity in actual vs. expected alerts for small p-values indicates that the data is sampled from a distribution with much thicker tails than the Gaussian used for the AD model. This results in an inability to regulate the number of alerts, indicating problems with their model. }
  	\label{fig:graphprints-analysis}
    \vspace{-.3cm}
\end{wrapfigure}
First, a graph is created for each 31 second interval of network traffic (with one second overlap) by representing IPs by nodes and adding a directed edge from source to destination IP nodes of at least one such flow occurred. 
Moreover, edges are given a binary color indicating if at least one private port occurred in the flow. 
Second, the graph is converted to a graphlet count vector, where each component of the vector is the count of an observed graphlet, which are small, node-induced subgraphs.  
Conceptually, each graphlet can be thought of as a building block graph, and the count vector gives the sums of each such small graph observed in the network traffic graph.\footnote{See Prvzulj et al.~\cite{prvzulj2004modeling} for pioneering work on graphlets.} 
Third, AD is performed by fitting a multivariate Gaussian to the history of observed vectors, and then, given a new vector, alerting if its p-value is below a threshold.\footnote{We note that Harshaw et al. detected events with sufficiently high Mahalanobis distance. 
It is easy to show that Mahalanobis distance is an anomaly score that respects the Gaussian distribution in the sense of Defn.~\ref{defn:a-score} by proving $pv_k(x_0) =  1 - G_k(m(x)^2),$
where $pv_k$ denotes the p-value of the $k-$variate normal distribution, $m$ the Mahalanobis distance, and $G_k$ is the cumulative density of the $\chi^2$ random variable with $k$ degrees of freedom.}
Harshaw et al. implemented this method as a streaming detection algorithm, initially fitting the Gaussian to the first 150 (of 350) data points, and iteratively scoring, then refitting the Gaussian to each of the subsequent  200 events. 
In order to prevent unknown attacks or other anomalies in the data from affecting their model, Harshaw et al. used the Minimum Covariance Determinant (MCD) algorithm with $h = 0.85$. 
This isolates the most anomalous $15\% = (1-h)*100\%$ of the data and fits the Gaussian to the remaining $h = 85\%$.\footnote{See Rousseeuw~\cite{rousseeuw1999fast} for algorithmic details.}

\subsection{GraphPrints Data Results}
First note that because the AD is based on low p-values of a Gaussian distribution, which has no plateaus, Corollaries~\ref{cor:eq-1},\ref{cor:continuous}, and \ref{cor:eq-2} all independently imply that equality holds in Theorem~\ref{thm:alert-rate}. 
Operationally, this means that users can specify the expected number of alerts (not just bound them), provided events are sampled from the model's distribution. 
Testing this for a simple example
 shows alert rate regulation fails; for example, p-value threshold $\beta = 0.01 = 2/200$ corresponds to an expected two alerts in the data, but produces over 60! 
This indicates that our detection model is a bad fit to the data. 
Digging deeper, Figure~\ref{fig:graphprints-analysis} shows that for all small p-values the realized alerts far exceed the expected, violating the theorem. 
We can conclude that the data's distribution has much thicker tails than the Gaussian used for detection\textemdash in short, the detection model is not a good fit to the data. 
This is perhaps unsurprising recalling the use of MCD fitting, which effectively discarded the $15\%$ most outlying observations before computing the mean and covariance. 

The result above illustrates a tradeoff afforded by the mathematical framework\textemdash  either accurate regulation of the alert rate is possible, or the bound\slash equality on the expected number of alerts is not obeyed but information on the fitness (or lack thereof)  of the distribution is gained.

\section{Conclusion} 
In this work we consider the problem of setting the threshold of multiple heterogeneous and/or streaming anomaly detectors. 
By assuming probability models of our data and defining anomalies as low p-value events, we prove theorems for bounding the likelihood of an anomaly. 
Leveraging the mathematical foundation, we give and test  algorithms for setting the threshold to limit the number of alerts of such  a system. 
Our algorithmic developments rely on the underlying assumption that observations are sampled from the model distribution. 
As the theorems hold independently of the distribution, our threshold-setting method persist as models evolve online or for a heterogeneous collection of models (so long as the assumption holds). 
Using the Skaion synthetic network flow data, we implement an AD system of $\approx2,500$ adaptive detectors that scores over 1.5m events in 5 hours, and show empirically how to set the threshold and regulate the number of alerts.  
The mathematical contrapositive of our main theorem operationally provides the user with an alternative\textemdash either the alert rate regulation is possible or the detector's model is a bad fit for the data. 
We demonstrate the use of this analytical insight by implementing the threshold algorithm on the real network data of Harshaw et al. and proving that their data is sampled from a distribution with much thicker tails than their detection model's. 
In summary, our work provides a mathematical foundation and empirically verified method for configuring anomaly detector thresholds to accommodate the hardships necessitated by modern cyber security operations.

\section*{Acknowledgements}
\label{sec:acks} 
Thank you J. Laska, V. Protopopescu, M. McClelland, L. Nichols, J. Gerber, and reviewers whose comments helped polish this document.  
This material is based on research sponsored by the U.S. Department of Homeland Security (DHS) under Grant Award Number 2009-ST-061-CI0001, DHS VACCINE Center under Award 2009-ST-061-CI0003, and Laboratory Directed Research and Development Program of Oak Ridge National Laboratory, managed by UT-Battelle, LLC, for the U. S. Department of Energy, contract DE-AC05-00OR22725.  
The views and conclusions contained herein are those of the authors and should not be interpreted as necessarily representing the official policies or endorsements, either expressed or implied, of the DHS.
This material is based upon work supported by the National Science Foundation Graduate Research Fellowship Program under Grant No. 25-0517-0143-002. Any opinions, findings, and conclusions or recommendations expressed in this material are those of the authors and do not necessarily reflect the views of the National Science Foundation. 
The data used in this research and referenced in this paper was created by Skaion Corporation with funding from the Intelligence Advanced Research Project Agency, via \url{www.impactcybertrust.org}. 

\bibliographystyle{IEEEtran}
\bibliography{refs}

\begin{thebibliography}{10}
\providecommand{\url}[1]{#1}
\csname url@samestyle\endcsname
\providecommand{\newblock}{\relax}
\providecommand{\bibinfo}[2]{#2}
\providecommand{\BIBentrySTDinterwordspacing}{\spaceskip=0pt\relax}
\providecommand{\BIBentryALTinterwordstretchfactor}{4}
\providecommand{\BIBentryALTinterwordspacing}{\spaceskip=\fontdimen2\font plus
\BIBentryALTinterwordstretchfactor\fontdimen3\font minus
  \fontdimen4\font\relax}
\providecommand{\BIBforeignlanguage}[2]{{%
\expandafter\ifx\csname l@#1\endcsname\relax
\typeout{** WARNING: IEEEtran.bst: No hyphenation pattern has been}%
\typeout{** loaded for the language `#1'. Using the pattern for}%
\typeout{** the default language instead.}%
\else
\language=\csname l@#1\endcsname
\fi
#2}}
\providecommand{\BIBdecl}{\relax}
\BIBdecl

\bibitem{ahmed2007multivariate}
T.~Ahmed~et al., ``Multivariate online anomaly detection using kernel recursive
  least squares,'' in \emph{26th INFOCOM}.\hskip 1em plus 0.5em minus
  0.4em\relax IEEE, 2007, pp. 625--633.

\bibitem{axelsson2000base}
S.~Axelsson, ``The base-rate fallacy and the difficulty of intrusion
  detection,'' \emph{ACM Trans. Inf. Syst. Secur.}, vol.~3, no.~3, pp.
  186--205, 2000.

\bibitem{christodorescu2007can}
M.~Christodorescu and S.~Rubin, ``Can cooperative intrusion detectors challenge
  the base-rate fallacy?'' in \emph{Malware Detection}.\hskip 1em plus 0.5em
  minus 0.4em\relax Springer, 2007, pp. 193--209.

\bibitem{ferragut2011automatic}
E.~Ferragut~et al., ``Automatic construction of anomaly detectors from
  graphical models,'' in \emph{CICS}.\hskip 1em plus 0.5em minus 0.4em\relax
  IEEE, 2011, pp. 9--16.

\bibitem{ferragut2012new}
------, ``A new, principled approach to anomaly detection,'' in \emph{ICMLA},
  vol.~2.\hskip 1em plus 0.5em minus 0.4em\relax IEEE, 2012, pp. 210--215.

\bibitem{fontugne2010mawilab}
R.~Fontugne~et al., ``Mawilab: Combining diverse anomaly detectors for
  automated anomaly labeling and performance benchmarking,'' ser.
  Co-NEXT.\hskip 1em plus 0.5em minus 0.4em\relax ACM, 2010.

\bibitem{garcia2014empirical}
S.~Garcia~et al., ``An empirical comparison of botnet detection methods,''
  \emph{Comp. \& Sec.}, vol.~45, 2014.

\bibitem{harshaw2016graphprints}
C.~Harshaw~et al., ``Graphprints: Towards a graph analytic method for network
  anomaly detection,'' in \emph{11th CISRC}.\hskip 1em plus 0.5em minus
  0.4em\relax ACM, 2016, pp. 15--19.

\bibitem{lakhina2004diagnosing}
A.~Lakhina~et al., ``Diagnosing network-wide traffic anomalies,''
  \emph{SIGCOMM}, vol.~34, no.~4, pp. 219--230, Aug. 2004.

\bibitem{moore2017modeling}
M.~Moore~et al., ``Modeling inter-signal arrival times for accurate detection
  of can bus signal injection attacks,'' in \emph{12th CISRC}.\hskip 1em plus
  0.5em minus 0.4em\relax ACM, 2017.

\bibitem{pevny2012identifying}
T.~Pevn{\`y}~et al., ``Identifying suspicious users in corporate networks,'' in
  \emph{Proc. Info. Forens. Sec.}, 2012.

\bibitem{rehak2009multiagent}
M.~Rehak~et al., ``Adaptive multiagent system for network traffic monitoring,''
  \emph{Intel. Sys.}, vol.~24, 2009.

\bibitem{scarfone2007guide}
K.~Scarfone and P.~Mell, ``Guide to intrusion detection and prevention
  systems,'' \emph{{NIST}}, vol. 800, 2007.

\bibitem{sexton2015attack}
J.~Sexton~et al., ``Attack chain detection,'' \emph{J. SADM}, vol.~8, no. 5-6,
  pp. 353--363, 2015.

\bibitem{tandon2009tracking}
G.~Tandon and P.~Chan, ``Tracking user mobility to detect suspicious
  behavior.'' in \emph{SDM}.\hskip 1em plus 0.5em minus 0.4em\relax SIAM, 2009,
  pp. 871--882.

\bibitem{thomas2010rapid}
A.~Thomas, ``Rapid: Reputation based approach for improving intrusion detection
  effectiveness,'' in \emph{{IAS}}.\hskip 1em plus 0.5em minus 0.4em\relax
  IEEE, 2010, pp. 118--124.

\bibitem{joslyn2014discrete}
C.~Joslyn~et al., ``Discrete mathematical approaches to graph-based traffic
  analysis,'' in \emph{ECSaR}, 2014.

\bibitem{kriegel2011interpreting}
H.~Kriegel~et al., ``Interpreting and unifying outlier scores,'' in
  \emph{SDM}.\hskip 1em plus 0.5em minus 0.4em\relax SIAM, 2011, pp. 13--24.

\bibitem{schubert2012evaluation}
E.~Schubert~et al., ``On evaluation of outlier rankings and outlier scores,''
  in \emph{SDM}, 2012, pp. 1047--1058.

\bibitem{bridges2015multi}
R.~Bridges~et al., ``Multi-level anomaly detection on time-varying graph
  data,'' in \emph{ASONAM'15}.\hskip 1em plus 0.5em minus 0.4em\relax New York,
  NY, USA: ACM, 2015, pp. 579--583.

\bibitem{bridges2016multi}
------, ``A multi-level anomaly detection algorithm for time-varying graph data
  with interactive visualization,'' \emph{Soc. Netw. Anal. \& Mining}, vol.~6,
  no.~1, p.~99, 2016.

\bibitem{ferragut2016detection}
E.~Ferragut~et al., ``Detection of anomalous events,'' Jun.~7 2016, {US} Patent
  9,361,463.

\bibitem{ferragut2016real}
------, ``Real-time detection and classification of anomalous events in
  streaming data,'' Apr.~19 2016, {US} Patent 9,319,421.

\bibitem{siffer2017ads}
A.~Siffer~et al., ``Anomaly detection in streams with extreme value theory,''
  in \emph{23rd ACM SIGKDD}, 2017.

\bibitem{huffer2017situational}
K.~Huffer and J.~Reed, ``Situational awareness of network system roles
  ({SANSR}),'' in \emph{12th CISRC}.\hskip 1em plus 0.5em minus 0.4em\relax
  ACM, 2017.

\bibitem{bhaumik2006filtering}
R.~Bhaumik~et al., \emph{Securing collaborative filtering against malicious
  attacks through anomaly detection}.\hskip 1em plus 0.5em minus 0.4em\relax
  AAAI, 2006, vol. WS-06-10, pp. 50--59.

\bibitem{cucurull2010disaster}
J.~Cucurull~et al., \emph{Anomaly Detection and Mitigation for Disaster Area
  Networks}.\hskip 1em plus 0.5em minus 0.4em\relax Berlin, Heidelberg:
  Springer Berlin Heidelberg, 2010, pp. 339--359.

\bibitem{ye2002audit}
N.~Ye~et al, ``Multivariate statistical analysis of audit trails for host-based
  intrusion detection,'' \emph{Trans. Comp.}, vol.~51, no.~7, pp. 810--820,
  2002.

\bibitem{ye2001chisquare}
N.~Ye and Q.~Chen, ``An anomaly detection technique based on a chi-square
  statistic for detecting intrusions into information systems,'' \emph{Qual.
  Reli. Eng. Int.}, vol.~17, no.~2, pp. 105--112, 2001.

\bibitem{patera2008space}
R.~Patera, ``Space event detection method,'' \emph{J. Space. and Rock.},
  vol.~45, no.~3, pp. 554--559, 2008.

\bibitem{buzen1995masf}
J.~Buzen and A.~Shum, ``{MASF} - multivariate adaptive statistical filtering.''
  in \emph{Int. {CMG} {Conference}}.\hskip 1em plus 0.5em minus 0.4em\relax
  Comp. Meas. Gr., 1995, pp. 1--10.

\bibitem{campello2015hierarchical}
R.~Campello~et al., ``Hierarchical density estimates for data clustering,
  visualization, and outlier detection,'' \emph{TKDD}, vol.~10, no.~1, p.~5,
  2015.

\bibitem{lazarevic2003comparative}
A.~Lazarevic~et al., \emph{A Comparative Study of Anomaly Detection Schemes in
  Network Intrusion Detection}.\hskip 1em plus 0.5em minus 0.4em\relax SIAM,
  2003, pp. 25--36.

\bibitem{reddy2015acoustic}
M.~Reddy~et al., ``Probabilistic detection methods for acoustic surveillance
  using audio histograms,'' \emph{Circ., Sys., Sig. Proc.}, vol.~34, no.~6, pp.
  1977--1992, 2015.

\bibitem{prvzulj2004modeling}
N.~Pr{\v{z}}ulj~et al., ``Modeling interactome: scale-free or geometric?''
  \emph{Bioinform.}, vol.~20, no.~18, pp. 3508--3515, 2004.

\bibitem{rousseeuw1999fast}
P.~Rousseeuw and K.~Driessen, ``A fast algorithm for the minimum covariance
  determinant estimator,'' \emph{Technometrics}, vol.~41, no.~3, pp. 212--223,
  1999.

\end{thebibliography}

\end{document}